\begin{document}
\title{Approximate Congestion Games for Load Balancing in Distributed Environment}

\author{Sandip Chakraborty, Soumyadip Majumder, Diganta Goswami
\thanks{A version of this paper has been presented at International Workshops on Distributed Systems, 2010, held at IIT Kanpur, India, as  ``work-in-progress'' document.}}
\institute{Department of Computer Science and Engineering\\
Indian Institute of Technology, Guwahati\\
Guwahati, India\\
Email: \{c.sandip, s.majumder, dgoswami\}@iitg.ernet.in
}

\maketitle

\begin{abstract}
The use of game theoretic models has been quite successful in describing various cooperative and non-cooperative optimization problems in networks and other domains of computer systems. In this paper, we study an application of game theoretic models in the domain of distributed system, where nodes play a game to balance the total processing loads among themselves. We have used congestion gaming model, a model of game theory where many agents compete for allocating resources, and studied the existence of Nash Equilibrium for such types of games. As the classical congestion game is known to be PLS-Complete, we use an approximation, called the $\epsilon$-Congestion game, which converges to $\epsilon$-Nash equilibrium within finite number of steps under selected conditions. Our focus is to define the load balancing problem using the model of $\epsilon$-congestion games, and finally provide a greedy algorithm for load balancing in distributed systems. We have simulated our proposed system to show the effect of $\epsilon$-congestion game, and the distribution of load at equilibrium state.
\end{abstract}

\section{Introduction}
One of the most important problem to attain high performance in distributed system with heterogeneous clients is to balance the total loads of all the processing jobs among the whole system, such that the processing burdens of all clients are almost same. This is the classical problem of load balancing in distributed system. A distributed system can be considered as a collection of computing devices connected with communication links by which resources are shared among active users.  In {\cite{grosu}}, the authors describe the load balancing problem as follows \textit{``Given the initial job arrival rates at each computer in the system find an allocation of jobs among the computers so that the response time of the entire system over all jobs is minimized"}. This definition of load balancing problem can be viewed as the \textit{Static Load Balancing}, where the main assumption is that all information governing load-balancing decisions such as characteristics of jobs, the computing nodes and the communication links are knows at advance. Here load-balancing decisions are made deterministically or probabilistically at compile time and remains constant during runtime. A variant of classic load balancing problem is the \textit{Dynamic Load Balancing}, where load balancing decisions are made at run time, and according to the current load of different computing devices, loads are transferred from one computing device to another dynamically at run time. In {\cite{sharma}}, the authors have shown that static load balancing algorithms are more stable in compare to dynamic load balancing algorithms and it is easy to predict the behavior of static algorithms. In this paper we model such static load balancing algorithm for distributed system using congestion game model. 
\begin{sloppypar}
The problem of load balancing in distributed system can be handled in three ways:
\begin{itemize}
	\item {\textit{Global Approach}: A single decision maker optimizes the overall response time using some optimization techniques. This approach is also called as \textit{social optimum}. This is the most frequent literature in study and has been studied extensively using different techniques such as nonlinear optimization, polynomial optimization etc.}
	\item {\textit{Cooperative Approach}: This is based on classical cooperative game theory where the decision makers cooperate between themselves by sharing informations through message passing, and then take the decision using the utility and pay-off function. In {\cite{grosu}}, the authors have modeled the problem of load balancing in distributed system using cooperative game theoretic approach.}
	\item{\textit{Non-cooperative Approach}: Here the decision is made using the pay-off and utility function, but the agents does not share any information between themselves. Congestion game is a variant of non-cooperative games where each agent's strategy consists of a set of resources, and the cost of the strategy depends only on players using each resource.}
\end{itemize}
\end{sloppypar}
\begin{sloppypar}
Congestion games have attracted a good deal of attention, because they can model a large class of routing and resource allocation scenarios. Another reason for using congestion games in resource allocation is that they possess \textit{pure Nash equilibria} {\cite{vocking}}. In general games, Nash equilibrium may involved mixed ( i.e., randomized) strategies for players, but congestion game always have a Nash equilibrium in which each player sticks to a single strategy. But the problem with Nash equilibrium in congestion game is that they are known to be \textit{PLS-Complete}{\cite{fabrikant}}. So it is difficult to find a Nash equilibrium in congestion games and so the convergence time is very high. In {\cite{chien}}, the authors describe a variant of congestion game, called $\epsilon$-congestion game which is an approximate version of pure Congestion games, and have been proved to possess a better convergence rate. The authors have proved that $\epsilon$-Nash dynamics in $\epsilon$-congestion game converge to an $\epsilon$-Nash equilibrium within a finite number of steps. 
\end{sloppypar}
\begin{sloppypar}
We have used the formulation of $\epsilon$-Nash Equilibrium, as described in {\cite{chien}}, to model the problem of load balancing in distributed system as a resource sharing problem. We have studied the formulation of the problem and the existence of $\epsilon$-Nash equilibrium for the problem.
\end{sloppypar}

\section{Related Works}
The problem of static load balancing for single class job distributed system has been studied extensively for global approach. In global approach the focus is to minimize the overall response time. In {\cite{tantawi}}, the authors have formulated the load balancing problem as a nonlinear optimization problem and have given an algorithm to solve that nonlinear optimization problem. In {\cite{kim}} and {\cite{kim2}}, Kim and Kameda derived a more efficient algorithm to introduce the problem. An comparison between several static load balancing algorithms with respect to job dispatching strategy has been studied by Tang and Chanson, in {\cite{tang}}.
\begin{sloppypar}
	The load balancing problem using game theoretic approaches also has been studied both for cooperative and non-cooperative approach. In {\cite{altman}}, Altman, Kameda and Hosokawa modeled distributed system as collection of nodes connected by either simplex or duplex connected links, and then described the dynamic load balancing problem for this system using game theoretic approach. They have established the proof for a unique Nash equilibrium for routing games with the above mentioned distributed system model, under quite general assumption on the costs. For this, they have considered two different architectures, in one the nodes are connected using duplex link, and in another they are connected via two one-way communication links. In {\cite{grosu}}, the authors have modeled the system as an M/M/1 queue, and then proposed an algorithm, called \textit{``Cooperative Static Scheme"}(COOP), using Nash Bargaining Solution, an interesting variant of cooperative game theoretic concepts and the solution of the problem using first order Kuhn-Tuker conditions. They have also compared their Nash bargaining solution algorithm with the existent static and dynamic load balancing algorithms.There are very few literatures in studying the non-cooperative models for load balancing problems in distributed systems. Kameda \textit{et al.} {\cite{kameda}} studied non-cooperative games and derived load balancing algorithms for both single class and multi class job distributed systems. For single class job, they have proposed an algorithm for computing Wardrop equilibrium, a variant of Nash equilibrium where number of agents participating in the game is infinite. In {\cite{roughgarden}}, the author has modeled the load balancing problem as a Stackelberg game, where one player acts as a leader and rest as followers. He has showed that optimal Stackelberg Strategy computing is NP-hard and hence he formulated a near optimal solution. In {\cite{suri2}}, the authors has proposed an uncoordinated load balancing algorithm for peer-to-peer system. They have analyzed the Nash Equlibrium under general latency function for a P2P system.
\end{sloppypar}
\begin{sloppypar}
	In this paper we have moved from these general game theoretic approaches, and modeled the problem as a congestion game, the most suitable strategy to model resource allocation problems using non-cooperative game models. The work of {\cite{fabrikant}} shows that finding a pure Nash equilibrium is PLS-Complete, and hence the convergence time for congestion games, that always produces a pure Nash equilibrium, is very high. Again, for some initial strategies the shortest path to an equilibrium in the Nash dynamics is exponentially long according to the the number of players. In {\cite{suri}}, the authors has proposed a selfish load balancing problem using atomic congestion game, and have shown that the worst case ratio between a Nash Solution and a Social Optimum, which is referred several times as Price of Anarchy, is at most 2.5. A recent advance in the field of congestion gaming model is $\epsilon$-congestion game. In {\cite{chien}}, the authors have studied these issues for congestion games, and proposed an approximation of pure congestion game, which they named as $\epsilon$-congestion game, where $\epsilon$-Nash dynamics converges to an $\epsilon$-Nash equilibrium within a finite number of steps under some bounded conditions. This work on the approximation on pure congestion game is the primary motivation behind our work to check how the load balancing problem can be well suited with the approximated congestion game, or the $\epsilon$-congestion game. We have used this version of congestion game to model our load balancing scheme. Finally we have shown by simulation that using a proper, well chosen value of $\epsilon$, we can reduce the number of iteration for congestion game to a very lower bound, and at this bound the load is well distributed among the processing nodes to minimize the processing time. 
\end{sloppypar}
\section{Concept of Congestion Game}
In this paper we address the problem of load balancing in the general arena of \textit{Congestion Game}, or more specifically an approximation of classical congestion game, called \textit{$\epsilon$-Congestion Game}. A congestion game can be formally described as a finite set of players $\left\{p_1,p_2,...,p_n\right\}$, each of which is assigned a finite set of \textit{strategies} $S_i$ and a cost function $c_i:S_1{\times}S_2...{\times}S_n{\longrightarrow}{\mathbb{N}}$ that he wishes to minimize. A \textit{state} $s = (s_1,s_2,...,s_n) \in S_1 \times S_2 \times ... \times S_n$ is any combination of strategies for the player. A state s is a \textit{pure Nash Equilibrium} if for all players $p_i$, $c_i(s_1,s_2,...,s_i,...,s_n) \leq c_i(s_1,s_2,...,\acute{s_i},...,s_n)$ for all $\acute{s_i} \in S_i$. Thus we can say that at a pure Nash equilibrium, no player can improve his cost by unilaterally changing his strategy. It is well known that every finite game has a mixed Nash equilibrium, but not a pure Nash equilibrium. Whereas congestion game always has pure Nash equilibrium.
\begin{sloppypar}
	In case of congestion games, players' cost are based on the shared usage of a common set of resources ( also called edges, in terms of network congestion games) R = $\{r_1,r_2,...,r_m\}$. A player's strategy set $S_i \subseteq 2^R$ is an arbitrary collection of subsets of R; his strategy $r_i$ will be therefore a subset of R. Each resource $r \in R$ has an associated \textit{nondecreasing} delay function $d_r : \{1,2,...,n\} \longrightarrow \mathbb{N}$. If t players are using the resource r, they will each incur a cost of $d_r(t)$. As a result in a state s = $(s_1,s_2,...,s_n)$, the cost of player $p_i$ is $c_i(s) = \sum\limits_{r \in s_i}^{}{d_r(f_s(r))}$, where $f_s(r)$ is the number of players using resource r under s; i.e. $f_s(r) = \left|\{j : r \in s_j\}\right|$ .
\end{sloppypar}
\begin{sloppypar}
	In {\cite{rosenthal}}, the author has shown that in every congestion game,every sequence of improvement steps is finite. This proposition can be shown by a potential function argument, called the \textit{Rosenthal's Potential function} $\phi : \sum_1\times...\times\sum_n\longrightarrow\mathbb{Z}$, defined as
	\[
	\phi{(s)}=\sum_{r \in R}{\sum\limits_{i=1}^{f_s(r)}{d_r(i)}}
\] 
This function has an interesting property that if player $p_i$ shifts strategy from $s_i$ to $\acute{s_i}$, the change in $\phi$ exactly mirrors the change in the player's cost: i.e., $\phi{(s)} - \phi{(\acute{s})} = c_i(s) - c_i(\acute{s}) $. The consequence from this observation is that, if we follow an iterative process here, such that, at each step one player changes its strategy to lower his cost ( a Nash dynamics), then the potential function $\phi$ will decrease until it reaches a local minimum, which must be a pure Nash equilibrium. But this does not provide a bound on the number of such player moves required to reach a pure Nash equilibrium in a congestion game ( so the congestion game is PLS-Complete), and this problem leads to the concept of a approximate Nash equilibrium, which is $\epsilon$-Nash equilibrium.
\end{sloppypar}
\section{$\epsilon$-Nash Dynamics and $\epsilon$-Nash Equilibrium}
In {\cite{chien}}, the authors has studied and formulated the concept of $\epsilon$-Nash Dynamics and $\epsilon$-Nash equilibrium. A state s  is an $\epsilon$-Nash equilibrium if no player can improve his cost by more than a factor of $\epsilon$ by unilaterally changing his strategy. The $\epsilon$-Nash dynamics is a modification of pure Nash dynamics, where players are permitted only $\epsilon$-moves, i.e. moves that improve the cost of the player by a factor of more than $\epsilon$. To make $\epsilon$-Nash dynamics concrete, the authors assume that among multiple players with $\epsilon$-moves available, at each step a move is made by the player with the largest incentive to move; i.e. the player who can make the largest relative improvement in cost. For this reason they have also introduced $\alpha$-bounded jump, where the delay function satisfies the condition $d_r(t+1) \leq \alpha{d_r(t)}$ for all $t \geq 1$ and $\alpha \geq 1$. In particular, a resource ( or edge) with $d_r(t)=\alpha^t$ satisfies the $\alpha$-bounded jump condition. They show that in such a condition, $\epsilon$-Nash dynamics converge to an $\epsilon$-Nash equilibrium within a finite number of steps, in fact in $\left\lceil n\alpha\epsilon^{-1}\log{(nC)}\right\rceil$ steps, where C is an upper bound on the cost of any player. This is the main motivation behind our work to use this modified approximate congestion gaming model to solve the load balancing problem in distributed system.We have the following theorem about $\epsilon$-Nash equilibrium {\cite{chien}}.
\begin{theorem}
\label{theorem:epsilonnash}
For $\epsilon \in \left[0,1\right)$, a state s = $(s_1,...,s_n) \in S_1\times...\times S_n$ is an $\epsilon$-Nash equilibrium if for all player $p_i$, $c_i(s_1,...,\acute{s_i},...,s_n) \geq (1-\epsilon)c_i(s_1,...,s_i,...,s_n)$ for all $\acute{s_i} \in S_i$.
\end{theorem}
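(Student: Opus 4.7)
The plan is to show that the stated inequality is an algebraic reformulation of the informal definition of $\epsilon$-Nash equilibrium introduced in the preceding paragraph, namely that no player $p_i$ can improve his cost by more than a factor of $\epsilon$ through a unilateral deviation. The forward direction will simply rearrange the relative-improvement inequality, and the converse will reverse that rearrangement.

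First, I would fix a player $p_i$ and an arbitrary alternate strategy $\acute{s_i} \in S_i$, and write $s' = (s_1,\dots,\acute{s_i},\dots,s_n)$ for the resulting state. The multiplicative improvement obtained by $p_i$ upon switching from $s_i$ to $\acute{s_i}$ is the ratio $(c_i(s) - c_i(s'))/c_i(s)$, and the $\epsilon$-Nash condition asserts that for every such choice this ratio is at most $\epsilon$. Multiplying through by $c_i(s) \geq 0$ and rearranging yields
\[
c_i(s_1,\dots,\acute{s_i},\dots,s_n) \;\geq\; (1-\epsilon)\, c_i(s_1,\dots,s_i,\dots,s_n),
\]
which is exactly the inequality in the theorem. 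Reversing these algebraic steps gives the converse, so the two conditions are equivalent.

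Next, I would make explicit the role of the hypothesis $\epsilon \in [0,1)$: this guarantees $1-\epsilon > 0$, so the right-hand side is a genuine, nontrivial lower bound on the deviating cost rather than a vacuous statement. I would also briefly address the degenerate case $c_i(s) = 0$: since delay functions take values in $\mathbb{N}$, costs are nonnegative, and in this case player $p_i$ has no room to improve, so the inequality $c_i(s') \geq 0$ holds trivially and the equivalence still goes through.

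\textbf{Main obstacle.} Because the theorem is essentially a definitional restatement, there is no deep technical hurdle. The only subtle point is fixing the interpretation of ``improvement by a factor of $\epsilon$'' as a multiplicative rather than additive notion; once that reading is committed (and it is the only one consistent with the convergence bound $\left\lceil n\alpha\epsilon^{-1}\log(nC)\right\rceil$ quoted from~\cite{chien}), the proof reduces to a single algebraic rearrangement together with the trivial boundary case above.
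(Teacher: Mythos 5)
Your proof is correct: the inequality $c_i(s_1,\dots,\acute{s_i},\dots,s_n) \geq (1-\epsilon)\,c_i(s_1,\dots,s_i,\dots,s_n)$ is precisely the algebraic restatement of the condition that no unilateral deviation improves player $p_i$'s cost by more than a factor of $\epsilon$, and your treatment of the multiplicative (rather than additive) reading, the hypothesis $\epsilon \in [0,1)$, and the degenerate case $c_i(s) = 0$ is sound. Note that the paper itself gives no proof of this statement --- it is imported directly from \cite{chien}, where this inequality is in effect the \emph{definition} of an $\epsilon$-Nash equilibrium --- so your definitional unwinding is exactly the argument the paper implicitly relies on, and there is nothing to compare beyond observing that your proof correctly fills a gap the paper left to the citation.
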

\section{System Model}
In a distributed system, some nodes act as the processor nodes, who process the jobs, and some nodes act as the load generator, who generates processes or jobs in a certain rate. We call the processor nodes as the servers, and the load generators as the client. It should be noted that in a typical distributed system, same node can act both as server and clients. But for the simplicity of modeling the system, we consider server and clients as separate nodes without the loss of generality, and our target is to assign the clients to the servers in a balanced way, such that the jobs generated at the clients are distributed almost equally among the servers and the overall system response time is minimized. Hence we need to generate a bipartite graph with n servers and m clients as shown in the figure {\ref{fig:systemmodel}}.
\begin{figure}[!t]
    \centering
        \includegraphics[width=3in]{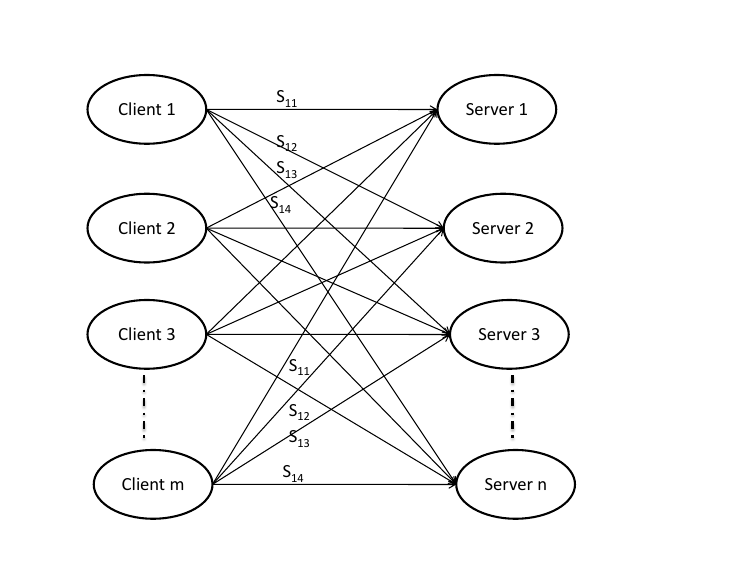}
    \caption{Distributed System Model for Load Balancing Problem}
    \label{fig:systemmodel}
\end{figure}
We use following parameters to model our system:\\
$\mu_i^{max}$ = The maximum processing rate of server i.\\
$\mu_i$ = The actual or effective processing rate of server i.\\
$\phi_j$ = Load generation rate at client j.\\
$S_{ji}$ = Fraction of the load assigned to server i by client j.\\
$\phi$ = $\sum\limits_{j=1}^{m}{\phi_j}$ which is total job arrival rate in complete distributed system.\\ 
Here, j=1,2,...,m and i=1,2,...,n.
It is obvious that for each client j;\\
$\sum\limits_{i=1}^{n}{S_{ji}}$ = 1; where $0\leq S_{ji}\leq 1$ ; and i=1,2,...,m\\
Our objective is to find the fraction $S_{ji}$ for each client j (j = 1,...,m) such that the expected execution time of the jobs in the system is minimized. We have the following conditions;\\
At each server i;\\
$\sum\limits_{j=1}^{m}{S_{ji}\phi_j} < \mu_i$; which implies that actual processing rate at any instance at server i must be less than the processing rate capacity or the advertised processing rate of server i. \\
Modeling each computer as an M/M/1 queuing system {\cite{luenberger}},\\
$F_i(\beta_i) = \frac{1}{\mu_i - \beta_i}$, where $\beta_i$ = average arrival rate of jobs at server i and $F_i(\beta_i)$ denotes the expected execution time of jobs processed at server i. For M/M/1 queueing system, the above condition must be satisfied also which guarantees stability of the overall system.\\
In our case, $\beta_i = \sum\limits_{j=1}^{m}{S_{ji}\phi_j}$ \\
Hence, $F_i(\beta_i) = \frac{1}{\mu_i - \sum\limits_{j=1}^{m}{S_{ji}\phi_j}}$\\
We can calculate the overall response time of client j as follows:\\
$R_j(\beta_i) = \sum\limits_{i=1}^{n}{S_{ji}F_i(\beta_i)} = \sum\limits_{i=1}^{n}{\frac{S_{ji}}{\mu_i-\sum\limits_{k=1}^{m}{S_{ki}{\phi_k}}}}$\\
Our objective is to minimize the overall response time of the system.
\section{Load Balancing as a $\epsilon$-Congestion Game among Clients}
Let for client j, the vector $S_j$ = $(S_{j1},S_{j2},...,S_{jn})$ be the load balancing strategy of client j. Then the set S = $\{S_1,S_2,...,S_m\}$ corresponds to the strategy space or strategy profile of our load balancing game. The cost function of client i is $c_i=R_i(\beta_j)$, and the corresponding delay function at resource i is $d_r(t)=F_i(\beta_i)$ where t is the number of resources, that is number of servers in load balancing game. It is clear that the delay function depends on number of clients associated with the particular resource and the function is nondecreasing. At this point of instance, we have the following assumptions:

\begin{enumerate}[(i)]
\item \textit{Each server i can tolerate up to a maximum job arrival rate $\lambda_i^{max}$.} Though in theory each server is capable to process jobs up to its processing rate, but in practice the performance of the system drops dramatically after a specific amount of load, because each server has to process some internal loads also. So this assumption is more likely to the real world scenario and required to proof the convergence of our $\epsilon$-congestion game.
\item \textit{There is a maximum job generation rate at the system $\phi_{max}$}. This assumption is also near to reality.
\end{enumerate}

The convergence property for our load balancing $\epsilon$-congestion game can be shown using following theorem {\ref{theorem:boundedjump}}. 
\begin{theorem}
\emph{\textbf{($\alpha$-bounded Jump Condition)}}
\label{theorem:boundedjump}
The delay function for resource j, as given above is nondecreasing and satisfies the $\alpha$-bounded jump condition for $\epsilon$-congestion game.
\end{theorem}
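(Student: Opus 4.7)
The plan is to attack the theorem in two pieces, corresponding to the two claims in its statement, and then combine them.

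First, for monotonicity, I would argue directly from the formula $F_i(\beta_i) = 1/(\mu_i - \beta_i)$. If $t$ denotes the number of clients assigning a positive fraction to server $i$, then moving from $t$ to $t+1$ adds a strictly nonnegative term $S_{ji}\phi_j$ to the sum $\beta_i = \sum_{j}S_{ji}\phi_j$. Under the stability condition $\beta_i < \mu_i$ guaranteed by the M/M/1 assumption, both $\mu_i - \beta_i^{(t)}$ and $\mu_i - \beta_i^{(t+1)}$ are positive, so the denominator decreases weakly and $F_i$ is therefore nondecreasing in $t$. This step should be essentially immediate.

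Second, for the $\alpha$-bounded jump, I would form the ratio
\[
\frac{d_r(t+1)}{d_r(t)} \;=\; \frac{F_i(\beta_i^{(t+1)})}{F_i(\beta_i^{(t)})} \;=\; \frac{\mu_i - \beta_i^{(t)}}{\mu_i - \beta_i^{(t+1)}}
\]
and look for an upper bound that is independent of $t$. The numerator is maximized by taking $\beta_i^{(t)}\ge 0$, giving $\mu_i-\beta_i^{(t)}\le \mu_i$. The denominator is minimized by pushing $\beta_i^{(t+1)}$ as close to $\mu_i$ as the model allows, and this is exactly what Assumption (i) prevents: since $\beta_i^{(t+1)} \le \lambda_i^{\max} < \mu_i$, we get $\mu_i - \beta_i^{(t+1)} \ge \mu_i - \lambda_i^{\max} > 0$. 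Combining these,
\[
\frac{d_r(t+1)}{d_r(t)} \;\le\; \frac{\mu_i}{\mu_i - \lambda_i^{\max}},
\]
so setting $\alpha := \max_{1\le i\le n}\dfrac{\mu_i}{\mu_i - \lambda_i^{\max}}$ yields $d_r(t+1)\le \alpha\, d_r(t)$ with $\alpha\ge 1$, as required by the definition of $\alpha$-bounded jump stated in Section~4.

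The main obstacle, I expect, is not an algebraic one but a modeling one: the abstract congestion-game definition measures the delay as a function of the integer $t = f_s(r)$, whereas here the load $\beta_i$ depends not only on how many clients use server $i$ but also on the fractional strategies $S_{ji}$ and rates $\phi_j$. I therefore need to justify that the worst-case ratio across all admissible transitions from a $t$-client configuration to a $(t+1)$-client configuration is still bounded by the same $\alpha$; this is where Assumption~(ii), that no single client can inject more than $\phi_{\max}$ into the system, is used, because it guarantees that a single additional client's contribution $S_{j,i}\phi_j\le \phi_{\max}$ is finite, so the transition $\beta_i^{(t)} \to \beta_i^{(t+1)}$ is well defined and the supremum over admissible load profiles is attained within the region $\beta_i \le \lambda_i^{\max}$. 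Once this uniformity is in place, the bound above gives a single $\alpha$ that works for every resource and every state, completing the proof.
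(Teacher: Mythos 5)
Your proof is correct, and its skeleton is the same as the paper's: establish monotonicity from the fact that an additional client adds a nonnegative term to $\beta_i$, then bound the ratio $d_r(t+1)/d_r(t)$ by a constant independent of $t$ using the two assumptions. The one substantive difference is how the constant is extracted. The paper rewrites the ratio as $\frac{d_r(t+1)}{d_r(t)} = 1 + \frac{S_{(t+1)i}\phi_{t+1}}{\mu_i - \sum_{j=1}^{t+1}S_{ji}\phi_j}$, bounds the newly arriving client's load increment $S_{(t+1)i}\phi_{t+1}$ by $\phi_{max}$ (this is exactly where Assumption (ii) enters), and bounds the denominator below by $\mu_i - \lambda_i^{max}$ via Assumption (i), yielding $\alpha_i = 1 + \frac{\phi_{max}}{\mu_i - \lambda_i^{max}}$. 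You instead bound the numerator of the raw ratio by $\mu_i$ (since $\beta_i^{(t)} \geq 0$), which gives $\alpha = \frac{\mu_i}{\mu_i - \lambda_i^{max}} = 1 + \frac{\lambda_i^{max}}{\mu_i - \lambda_i^{max}}$ and uses only Assumption (i). Both constants are valid and uniform in $t$, so the theorem follows either way; but the paper's constant is tighter whenever $\phi_{max} < \lambda_i^{max}$, i.e., whenever a single client's generation rate is below the server's tolerated aggregate load, which is the intended regime --- your bound in effect charges the worst case as though the server's entire tolerated load could arrive in one step. Relatedly, your use of Assumption (ii) is slightly off target: it is not a compactness or well-definedness device here, but the direct per-step increment bound $S_{(t+1)i}\phi_{t+1} \leq \phi_{max}$, and if you fold it back in you recover the paper's sharper $\alpha$. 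One genuine credit to your write-up: you explicitly flag that the delay here is not a function of the integer count $t$ alone (it depends on which clients participate and with what fractions $S_{ji}$), a modeling gap the paper's proof passes over silently; demanding a bound uniform over all admissible $t \to t+1$ transitions is the correct repair, and both constants supply it.
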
 
\begin{proof}
For resource $i$,\\
$d_r(t)=\frac{1}{\mu_i-\sum\limits_{j=1}^{t}{S_{ji}\phi_j}}$, and\\
$d_r(t+1)=\frac{1}{\mu_i-\sum\limits_{j=1}^{t+1}{S_{ji}\phi_j}}$\\
Hence,$\frac{d_r(t+1)}{d_r(t)} = \frac{\mu_i-\sum\limits_{j=1}^{t}{S_{ji}\phi_j}}{\mu_i-\sum\limits_{j=1}^{t+1}{S_{ji}\phi_j}}$\\
This can be simplified as,\\
$\frac{d_r(t+1)}{d_r(t)} = 1 + \frac{S_{(t+1)i}\phi_{t+1}}{\mu_i-\sum\limits_{j=1}^{t+1}{S_{ji}\phi_j}}$\\
So, it is clear that $d_r(t+1) \geq d_r(t)$, so the delay function is nondecreasing.
According to our assumptions, for each server j, we can get $\alpha_j$ as;\\
$\alpha_j = 1 + \frac{\phi_{max}}{\mu_i - \lambda_i^{max}}$\\
clearly, $\alpha_j > 1$ for all server ( resources ) j. Hence the $\alpha$-bounded jump condition is satisfied. So in this game, the $\epsilon$-Nash dynamics will converge to $\epsilon$-Nash equilibrium within a finite number of steps.
\end{proof}
We define the potential function for our game which is similar to Rosenthal's Potential Function, but with a multiplier in addition with the delay function. The multiplier comes because the overall response time of all the servers depends on a fraction of each client's load. The potential function $\phi(s)$ with strategy set s of the system is as follows :\\
$\phi(s) = \sum\limits_{j=1}^{n}{\sum\limits_{i=1}^{m}{S_{ij}{\frac{1}{\mu_j - \sum\limits_{k=1}^{m}{S_{kj}\phi_k}}}}}$ where m is the number of clients (players) and n is the number of servers (resources). It is clear that the overall system response time depends only on those fractions for which the multiplication factor $S_{ij}$  is nonzero. The cost function incurred at each client i with strategy set s is as given before;\\
$c_i(s) = \sum\limits_{i=1}^{n}{\frac{S_{ji}}{\mu_i - \sum\limits_{k=1}^{m}{S_{ki}\phi_k}}}$
\begin{theorem}
\emph{\textbf{(Existence of Exact Potential Function)}}
The potential function as defined above obeys the property of Rosenthal's Potential Function, i.e. if player $p_i$ shifts strategy from $s$ to $\acute{s}$, the changes in $\phi$ exactly mirrors the change in the player's cost. 
\end{theorem}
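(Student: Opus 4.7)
My plan is to verify the identity by direct algebraic expansion, following the template used for Rosenthal's classical potential. Fix the deviating client $p_\ell$ and let $s$ and $\acute{s}$ agree in every coordinate except that $S_{\ell j}$ is replaced by $\acute{S}_{\ell j}$ for $j=1,\ldots,n$. Introduce the shorthands $B_j(s)=\sum_{k=1}^{m} S_{kj}\phi_k$ and $D_j(s)=\mu_j-B_j(s)$, so that $D_j(\acute{s})=D_j(s)+(S_{\ell j}-\acute{S}_{\ell j})\phi_\ell$. This cleanly separates the two mechanisms by which the potential changes: a change in the numerators $S_{\ell j}$ (only the row $k=\ell$ moves), and a change in the denominators $D_j$ that is felt by every client still using server $j$.

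Next I would split the double sum in $\phi$ along the $k=\ell$ row:
\[
\phi(s)=\sum_{j=1}^{n}\frac{S_{\ell j}}{D_j(s)}+\sum_{j=1}^{n}\sum_{k\ne\ell}\frac{S_{kj}}{D_j(s)}.
\]
The first piece is exactly $c_\ell(s)$. Subtracting $\phi(\acute{s})$ and regrouping gives
\[
\phi(s)-\phi(\acute{s})=\bigl[c_\ell(s)-c_\ell(\acute{s})\bigr]+\sum_{j=1}^{n}\Bigl(\sum_{k\ne\ell}S_{kj}\Bigr)\!\left[\frac{1}{D_j(s)}-\frac{1}{D_j(\acute{s})}\right],
\]
so that matching the claim reduces to showing the residual double sum on the right is zero.

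The real work, and the step I expect to be the main obstacle, is this residual cancellation. The denominators $D_j$ shift by exactly $(S_{\ell j}-\acute{S}_{\ell j})\phi_\ell$ at precisely the servers whose share $p_\ell$ reallocates, and the other clients continue to pay on those servers, so the residual does not obviously vanish term by term. My first attempt would be algebraic: write
\[
\frac{1}{D_j(\acute{s})}-\frac{1}{D_j(s)}=\frac{(S_{\ell j}-\acute{S}_{\ell j})\phi_\ell}{D_j(s)D_j(\acute{s})},
\]
and try to exploit the feasibility constraints $\sum_{j=1}^{n}S_{\ell j}=\sum_{j=1}^{n}\acute{S}_{\ell j}=1$ together with the assumed load bounds $\sum_{k\ne\ell}S_{kj}\phi_k<\mu_j$ to pair positive and negative contributions across the servers between which $p_\ell$ shifts mass. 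If this route does not close in full generality, I would fall back to the integral analog of Rosenthal's potential, namely $\tilde{\phi}(s)=\sum_{j=1}^{n}\int_{0}^{B_j(s)}(\mu_j-x)^{-1}\,dx$, which is the standard exact potential for delay $1/(\mu_j-x)$ in splittable-flow congestion models, and argue either that $\tilde{\phi}$ and the paper's $\phi$ differ by a term independent of $p_\ell$'s strategy, or that the theorem should be read in this integral sense. Either route would then deliver the exact potential property and complete the proof.
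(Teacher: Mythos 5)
Your decomposition is correct, and it is in fact more explicit than the paper's own proof, which is purely verbal: the paper asserts that because $\sum_{i}S_{ji}=1$ the ``weighted sum of the multipliers'' is constant, so the cross terms between players cancel and $\phi(s)-\phi(\acute{s})=c_\ell(s)-c_\ell(\acute{s})$ follows ``from the structure of both the functions.'' The residual term you isolate is precisely the step the paper waves away, and you are right to flag it as the main obstacle --- but neither of your proposed routes can close it, because the residual is genuinely nonzero. Note first that the paper's $\phi$ is identically the social cost $\sum_{k=1}^{m}c_k(s)$ (the double sum over the client index is just the sum of all cost functions), so your residual equals $\sum_{k\ne\ell}\bigl[c_k(s)-c_k(\acute{s})\bigr]$, and exactness would require that a unilateral deviation by $p_\ell$ never change the aggregate cost of the \emph{other} players --- impossible when delays are load-dependent and shared. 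A two-client, two-server instance makes this concrete: take $\mu_1=\mu_2=\mu$ with $\phi_1+\phi_2<\mu$, fix client $2$ entirely on server $1$, and let client $1$ move all of its load from server $1$ to server $2$. Then your residual evaluates to
\[
S_{21}\left[\frac{1}{\mu-\phi_1-\phi_2}-\frac{1}{\mu-\phi_2}\right]=\frac{\phi_1}{(\mu-\phi_1-\phi_2)(\mu-\phi_2)}>0 ,
\]
so $\phi(s)-\phi(\acute{s})\ne c_1(s)-c_1(\acute{s})$ whenever $\phi_1>0$. The feasibility constraints $\sum_{j}S_{\ell j}=\sum_{j}\acute{S}_{\ell j}=1$ cannot rescue the pairing argument (which is exactly the cancellation the paper claims), because the bracketed differences are weighted by the other clients' masses $\sum_{k\ne\ell}S_{kj}$, which differ from server to server.

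Your fallback fails as well, for a structural reason worth naming. The Beckmann-type integral $\tilde{\phi}(s)=\sum_{j}\int_{0}^{B_j(s)}(\mu_j-x)^{-1}\,dx$ is an exact potential only for \emph{nonatomic} players, i.e.\ infinitesimal deviations: for an atomic client shifting a discrete mass $\phi_\ell$, the change $\sum_{j}\int_{B_j(s)}^{B_j(\acute{s})}(\mu_j-x)^{-1}\,dx$ does not equal the change in $c_\ell$, and $\tilde{\phi}-\phi$ manifestly depends on $p_\ell$'s own strategy, so neither of your two readings goes through. The honest conclusion is that the statement, with the paper's $\phi$ and the atomic splittable cost model, is false as an exact-potential claim; the paper's proof does not establish it, and your calculation correctly pinpoints where it breaks. (What one could salvage is weaker: $\phi$ as social cost, or $\tilde{\phi}$ in the nonatomic limit, but neither yields the mirroring identity $\phi(s)-\phi(\acute{s})=c_\ell(s)-c_\ell(\acute{s})$ claimed in the theorem.)
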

\begin{proof}
As $\sum\limits_{i=1}^{n}{S_{ji}} = 1$ for each client j, so the total weighted sum of the multipliers at both the cost function and potential function is always 1, which is a constant value, and thus it does not incurs any change in the relative deference between potential functions and costs with two strategies $s$ and $\acute{s}$. At each independent move, the change in the player's cost is equal to the change in total delay at all resources, as the delay at each resources reflects directly at the player's cost.  Hence from the structure of both the functions it is easy to conclude that, $\phi(s) - \phi(\acute{s}) = c_i(s) - c_i(\acute{s})$, which follows the theorem.
\end{proof}
It can be also checked easily that the cost function satisfies Theorem {\ref{theorem:epsilonnash}}, that is the cost can be increased always above a threshold given by the value of $\epsilon$. Hence it is clear that our load balancing game model converges to $\epsilon$-Nash equilibrium within a finite steps. Our objective is to minimize the potential function starting from a initial strategy $s_0$, where each node independently tries to minimize its own cost function and that directly reflects the change in overall system response time which can be measured using the potential function as defined above.
\begin{sloppypar}
	The optimization problem at each client i will be as follows;\\
	$min_{S_j}R_j(s)$ where $S_j$ is the current strategy profile for client j and s is the strategy set of the system.\\
	Subjected to;\\ \\
	$S_{ji} \geq 0$, i=1,2,...,n\\
	$\sum\limits_{i=1}^{n}{S_{ji}} = 1$\\
	$\sum\limits_{k=1}^{n}{S_{ki}\phi_k} < \mu_i$, i=1,2,...,m\\
	\end{sloppypar}
	\begin{sloppypar}
	From the above conditions it is clear that the servers with higher processing power should have higher fractions of jobs assign to it. If the computers are arranged in the decreasing order of their processing rates, then we get a partial order for client j as $S_{j1} > S_{j2} > ... > S_{jn}$.
	Now in real system, there will be some servers with low processing rates, where no load will be assigned. So after a index k, $S_{ji} = 0$ for i=k,k+1,...,n.
\end{sloppypar}
Now when a client j runs its own optimization problem, the algorithm is given in (Algorithm {\ref{algo:optimal}}).Each client node runs this algorithm independently to find out its own local optimum solution.
\begin{algorithm}[!t]
\textbf{Algorithm OPTIMAL:}\\
\KwIn {$\mu_1,\mu_2,...,\mu_n,\phi_j$}
\KwOut {$S_j$, the strategy set for user j}
Arrange the computers in decreasing order of their processing rates, $\mu_1,\mu_2,...,\mu_n$\\
let $temp \leftarrow \sum\limits_{i=1}^{n}\mu_i - \phi_j$ \\
\While{temp $\geq \mu_n$}{
			$S_{jn} \leftarrow 0$\;
			$n = n - 1$\;
			$temp \leftarrow \sum\limits_{i=1}^{n}\mu_i - \phi_j$\;
}
$sum = \sum\limits_{i=1}^{n}{\mu_i}$\;
\ForEach{$i \in \{1,2,...,n\}$}{
			${loadFrac} = \mu_i / sum$\;
			\If{canAssign($loadFrac$)}{
						$S_{ji} \leftarrow loadFrac$\;
			}
			\Else{
						$S_{ji} \leftarrow 0$\;
						$sum = sum - \mu_i$\;
			}
}
\caption{Calculating Optimal Solution for Client j with Current Effective Processing Rates $\mu$ of Servers}
\label{algo:optimal}
\end{algorithm}
\begin{sloppypar}
	For the overall system optimization, we assume that the clients enter the system one after another. This assumption follows directly from the properties of potential function. The initial strategy set of the system is $S_0$, and according to the property of $\epsilon$-Nash equilibrium, the system converges within a finite number of steps with any initial strategy. One initial strategy can be that whole job of a node is assigned to its own server. With this initial strategy the system moves to a new state as (Algorithm {\ref{algo:cong}}). The algorithm works as a greedy strategy where each node chooses the current best solution and according to the greedy property and reachability conditions of congestion games, the system guarantees to be terminated at $\epsilon$-Nash Equilibrium within a finite number of steps. Here the system designer needs to choose $\epsilon$ according to the system design using practical simulation.
	\begin{algorithm}[!t]
	\textbf{Algorithm LOAD\_BALANCE:}\\
	\KwIn{$S_0$, initial strategy set, and $\mu=\{\mu_1,\mu_2,...,\mu_n\}$, the current effective processing rate of servers}
	\KwOut{$S_{eq}$, the strategy at Nash Equilibrium}
	$x=1$\;
	\ForEach{client j at iteration x}{
		\ForEach{Server i}{
			Initialize $\mu_i \leftarrow \mu_i^{max}$
		}
		\ForEach{Server i}{
			$\mu_i^x \leftarrow \mu_i^{(x-1)} - \sum\limits_{k=1; k \ne j}^{m}{S_{ki}\phi_k}$ \;
		}
		$S_j^x = OPTIMAL(\mu_1,\mu_2,...,\mu_n,\phi_j)$\;
		Compute $c_j^x(S_j^x)$, the cost of user j with strategy profile $S_j^x$\;
		\If{$c_j^x(S_j^x) - c_j^{(x-1)}(S_j^{(x-1)}) < \epsilon$}{
			$S_j^x = S_j^{(x-1)}$\;
			$c_j^x = c_j^{(x-1)}$\;
		}
		$x=x+1$	
	}
	Exit when there is no change in cost, that is Nash equilibrium has been reached.\;
	\caption{Load Balancing using $\epsilon$-Congestion Game}
	\label{algo:cong}
	\end{algorithm}  
\end{sloppypar}
\section{Simulation Results}
We have simulated the system using a program written in Java programming language. Each server has some maximum processing rates, and we use the random number generator to generate load at each clients. The unit of the processing rates of the server and the job generation rate at client is taken as number of jobs per second. For our simulation we have taken different situation with different number of servers with processing rates varying from small to large amount, and similarly for clients. We have maintained the dynamic nature of the system as much as possible while setting up the simulation environment.
\begin{sloppypar}
	To show that the load is balanced among the processing nodes, we have defined the metrics \textbf{Load Ratio}.The metrics is defined as follows,\\ \\
	${LR_i} = \frac{L_i}{\mu_i^{max}}
								= \frac{\mu_i^{max} - \mu_i}{\mu_i^{max}}$;\\
	where,\\ $LR_i$ = Load Ratio at server i,\\
	$L_i$ = Total Load assigned at server i, \\
	$\mu_i^{max}$ = Maximum processing rate of server i, \\
	$\mu_i$ = Current Effective processing rate of server i.\\
\end{sloppypar}
\begin{sloppypar}
\begin{figure}[!t]
    \centering
        \includegraphics[width=4.5in]{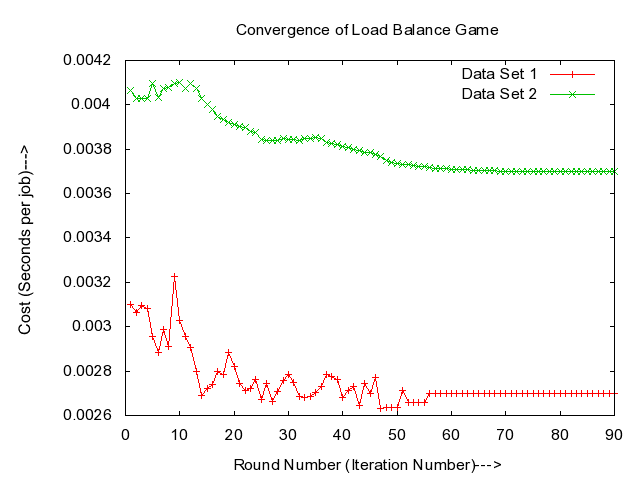}
    \caption{Convergence at Nash equilibrium}
    \label{fig:converge}
\end{figure}
In figure {\ref{fig:converge}} we have shown two test cases with different set of processing rates at servers and job generation rates at clients. We can see from the figure, that after certain number of rounds, the system converges to the equilibrium state, and at the equilibrium state the cost is stable and minimized. Note that Congestion game always have a pure strategy Nash Equilibrium, and hence the Nash Equilibrium here is unique.
\end{sloppypar}
\begin{sloppypar}
	\begin{figure}[!t]
    \centering
        \includegraphics[width=4.5in]{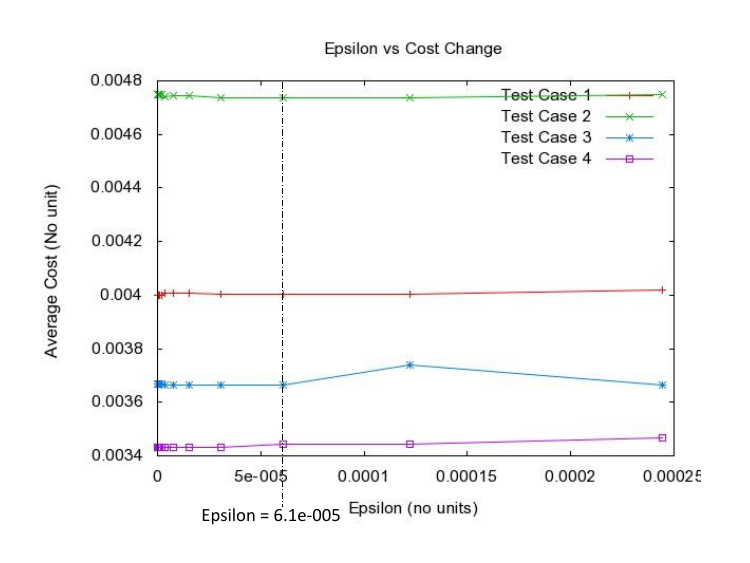}
    \caption{Cost with different $\epsilon$ values}
    \label{fig:cost}
\end{figure}
\begin{figure}[!t]
    \centering
        \includegraphics[width=4.5in]{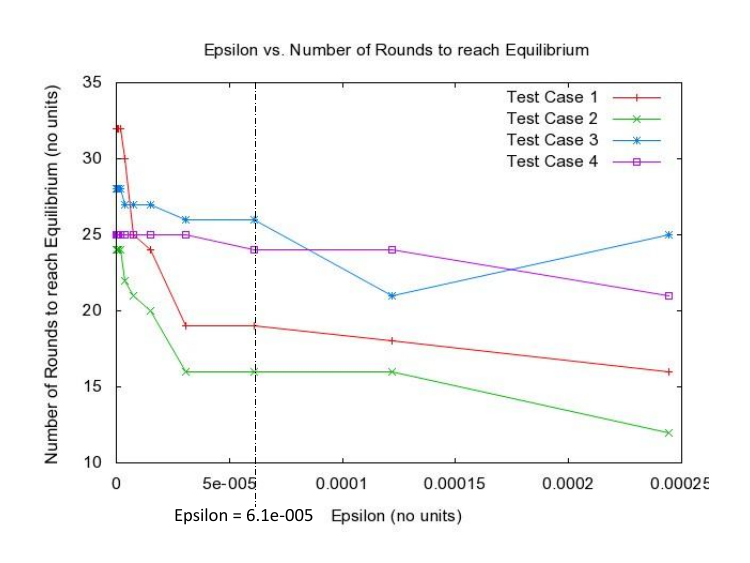}
    \caption{Number of rounds to find out the convergence state with different $\epsilon$ values}
    \label{fig:rounds}
\end{figure}
In figure {\ref{fig:cost}} and figure {\ref{fig:rounds}} we have shown four different test cases where each test case has same number of clients and servers, but with different set of maximum processing rates and job generation rates. From figure {\ref{fig:cost}}, we can see that at $\epsilon = 6.1\times{e^{-5}}$, all the systems converges to a state, where it gives minimum cost, and decreasing the value of $\epsilon$ does not improve the system cost. With $\epsilon = 6.1\times{e^{-5}}$, we can see from figure {\ref{fig:rounds}} that the number of rounds required to converge at Nash Equilibrium state is smaller than number of rounds required for pure strategy Congestion Game, which is essentially with $\epsilon = 0$. Thus we can see with properly chosen $\epsilon$ value, $\epsilon$-Congestion game converges to Nash Equilibrium rapidly, but with same cost than the pure Congestion Game. 
\end{sloppypar}
\begin{sloppypar}
	\begin{figure}[!t]
    \centering
        \includegraphics[width=4.5in]{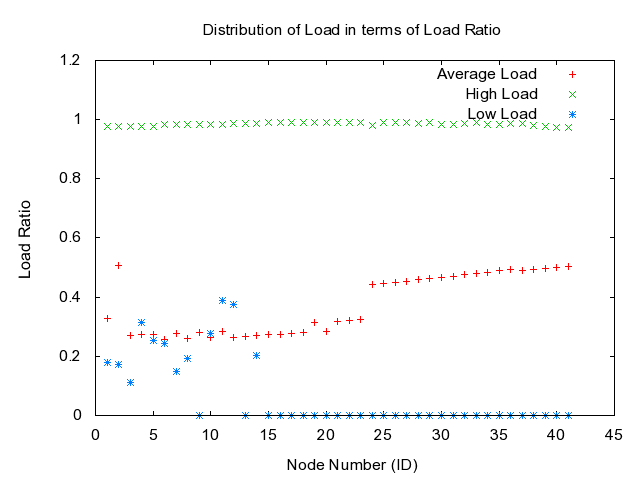}
    \caption{Distribution of Load in terms of Load Ratio}
    \label{fig:loadRatio}
\end{figure}
\begin{figure}[!t]
    \centering
        \includegraphics[width=4.5in]{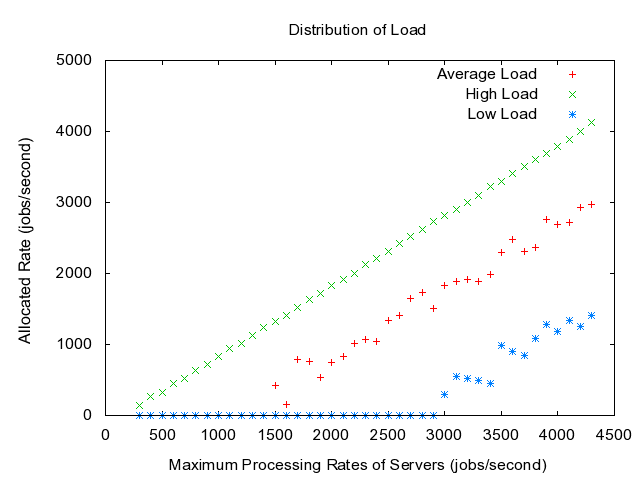}
    \caption{Distribution of Loads}
    \label{fig:load}
\end{figure}
In figure {\ref{fig:loadRatio}} and figure {\ref{fig:load}} we have shown the distribution of load among the server nodes. We have considered three cases, when the load is high, that is all clients generate jobs with high rates, when the load is low, that is job generation rate is low, and when the rate is average, that essentially follows the median of a Gaussian distribution. We can see from the figures that when load is high, all the server nodes has been assigned some load, but when load is low, only the servers with high processing rates gets the load to make the system performance better. The servers with low processing rates are only considered when load is very high. 
Hence we can have an effective distribution of loads among the processing nodes.  
\end{sloppypar}
 
\section{Conclusion and Future Works}
In this paper we have proposed a completely new framework for load balancing using $\epsilon$-Congestion game. We have shown the existence of pure Nash equilibrium in such a game and proposed a greedy algorithm to solve the problem. In spite of modeling the problem using pure congestion game as it is PLS-Complete, we have used the approximation of congestion game that converges to the equilibrium state within a finite number of steps. It should be noted that in {\cite{chien}}, the authors have also proved that even for symmetric congestion games with the bounded jump condition on all edges, finding a Nash equilibrium can still be PLS-complete; thus in this sense, bounded jumps are not a major restriction on the power of congestion games. This is a future motivation of study to check whether the processing overheads of $\epsilon$-congestion game is significantly larger than symmetric congestion game or not. Finally we have simulated the system to show that with a properly chosen value of $\epsilon$, the system converges rapidly to the Nash equilibrium than the pure Congestion Game. We have also shown by simulation that the system distributes the load properly among the processing nodes, and hence the load balancing is achieved using a distributed manner.

\end{document}